\newcommand{\R}{\mathbb{R}}
\newcommand{\dist}{\mathrm{dist}}
\renewcommand{\P}{\mathbb{P}}
\newcommand{\E}{\mathbb{E}}
\newcommand{\parent}{\mathrm{parent}}
\newcommand{\vt}{v^{(t)}}
\newcommand{\OSS}{\textsf{Online Vector Sparsification}}
\def\bone{{\mathbf 1}}
\def\to{{\rightarrow}}
\newtheorem{definition}{Definition}[section]
\newtheorem{lemma}[definition]{Lemma}
\newtheorem{theorem}[definition]{Theorem}
\newtheorem{claim}[definition]{Claim}
\newtheorem{prop}[definition]{Proposition}
\newtheorem{remark}{Remark}
\title{\textbf{An Alon-Boppana Type Bound for  Weighted Graphs and Lowerbounds for
Spectral Sparsification}}
\author{Nikhil Srivastava\footnote{{\sf nikhil@math.berkeley.edu.} {U.C.
Berkeley}.  Supported by NSF grant CCF-1553751 and a Sloan research fellowship.} \and Luca Trevisan\footnote{{\sf luca@berkeley.edu.} U.C. Berkeley. This material is based upon work supported by the National Science Foundation under Grants No. 1540685 and No. 1655215.}}
\begin{document}
\maketitle

\begin{abstract}
We prove the following Alon-Boppana type theorem for general (not necessarily regular) weighted graphs: if $G$ is an $n$-node weighted undirected graph of average combinatorial degree $d$ (that is, $G$ has $dn/2$ edges) and girth $g> 2d^{1/8}+1$, and if $\lambda_1 \leq \lambda_2 \leq \cdots \lambda_n$ are the eigenvalues of the (non-normalized) Laplacian of $G$, then 
\[ \frac {\lambda_n}{\lambda_2} \geq 1 + \frac 4{\sqrt d} - O \left( \frac 1{d^{\frac 58} }\right)  \]
(The Alon-Boppana theorem implies that if $G$ is unweighted and $d$-regular,
then $\frac {\lambda_n}{\lambda_2} \geq 1 + \frac 4{\sqrt d} - O\left( \frac 1 d \right)$ if the diameter is at least $d^{1.5}$.)

Our result implies a lower bound for spectral sparsifiers. A graph $H$ is a spectral $\epsilon$-sparsifier of a graph $G$ if
\[ L(G) \preceq L(H) \preceq (1+\epsilon) L(G) \]
where $L(G)$ is the Laplacian matrix of $G$ and $L(H)$ is the Laplacian matrix of $H$.
Batson, Spielman and Srivastava proved that for every $G$ there is an $\epsilon$-sparsifier $H$ of average degree $d$ where $\epsilon \approx \frac {4\sqrt 2}{\sqrt d}$ and the edges of $H$ are a (weighted) subset of the edges of $G$. Batson, Spielman and Srivastava also show that the bound on $\epsilon$ cannot be reduced below $\approx \frac 2{\sqrt d}$ when $G$ is a clique; our Alon-Boppana-type result implies that $\epsilon$ cannot be reduced below $\approx \frac 4{\sqrt d}$ when $G$ comes from a family of expanders of super-constant degree and super-constant girth. 

The method of Batson, Spielman and Srivastava proves a more general result, about sparsifying sums of rank-one matrices, and their method applies to an ``online'' setting. We show that for the online matrix setting the $4\sqrt 2 / \sqrt d$ bound is tight, up to lower order terms.
\end{abstract}

\newpage

\newcommand{\Diam}{\mathsf{diam}(G)}
\section{Introduction}

If $G$ is an (unweighted, undirected) $d$-regular graph on $n$ vertices, and if
$A$ is its adjacency matrix, then the largest eigenvalue of $A$ is $d$, and the
spectral expansion of $A$ is measured by the range of the other eigenvalues: the
smaller the range, the better the expansion. If one denotes by $d = \lambda_1(A)
\geq \lambda_2(A) \geq \cdots \lambda_n(A)$ the eigenvalues of $A$, then
Alon and Boppana \cite{Nilli91} showed that there is a limit to how concentrated
these eigenvalues can be as a function of $d$, namely:
\[ \lambda_2(A) \geq 2 \sqrt{d-1} - O \left( \frac {\sqrt{d}}{\Diam} \right) \]
where $\Diam$ is the diameter of $G$, and:
\[ \lambda_n(A) \leq - 2 \sqrt{d-1} + O \left( \frac {\sqrt{d}}{\Diam} \right). \]
Thus, for every fixed $d$, an infinite family of $d$-regular graphs will satisfy
$\lambda_2(A) \geq 2\sqrt{d-1} - o_n(1)$
and $\lambda_n(A) \leq 2\sqrt{d-1} + o_n(1)$.

Lubotzky, Phillips and  Sarnak \cite{LPS88} call a $d$-regular graph {\em
Ramanujan} if it meets the Alon-Boppana bound: \[ 2 \sqrt{d-1} \geq
\lambda_2(A)
\geq \lambda_n(A) \geq - 2\sqrt{d-1} \ , \] and they show that infinite families of
Ramanujan graphs exist for every degree such that $d-1$ is prime. Friedman
\cite{Friedman08} shows that for every fixed $d$ there is an ``almost
Ramanujan''  family of $d$-regular graphs (one for each possible number of
vertices) such that $\lambda_2(A) \leq 2\sqrt{d-1} +o_n(1)$ and $\lambda_n(A) \geq -
2\sqrt{d-1} - o_n(1)$. Furthermore, infinite families of bipartite Ramanujan
graphs (that is, bipartite graphs such that $\lambda_2 \leq 2 \sqrt{d-1}$) are
known to exist for every degree and every even number of vertices
\cite{MSS13:ram,MSS15} and to be efficiently constructible  \cite{Cohen16}.

Given our precise understanding of the extremal properties of the spectral
expansion of regular  graphs, there has been considerable interest in exploring
generalizations of the above theory to non-regular and/or weighted graphs. There
are at least three possible generalizations which have been considered.

\subsubsection*{Universal Covers}
Ramanujan graphs have the property that the range of their non-trivial adjacency
matrix eigenvalues is bounded by the support of the spectrum of their universal
cover (the infinite $d$-regular tree). Thus, Hoory, Linial and Wigderson \cite
{HLW06} define irregular Ramanujan graphs as graphs whose range of non-trivial
eigenvalues is contained in the spectrum of their universal  cover, or, in the
``one sided version'' as graphs whose second largest eigenvalue is at most the
spectral radius of the universal cover. An ``Alon-Boppana'' bound showing that
in any infinite family of graphs $\lambda_2(A)$ becomes arbitrarily close to the
spectral radius of the universal cover is proved in \cite{G95}. Existence proofs
of infinite families of irregular Ramanujan graphs according  to this definition
are presented in \cite{MSS13:ram}.

\subsubsection*{Normalized Laplacians} Another interesting notion of expansion
for irregular graphs is to require all the non-trivial eigenvalues of the
transition matrix of the random walk on $G$ to be in small range or,
equivalently, to require all the eigenvalues of the normalized Laplacian matrix
$\bar L = I - D^{-1/2} A D^{-1/2}$ to be in a small range around 1. This is a
natural definition because control of the normalized Laplacian eigenvalues
guarantees some of the same properties of regular expanders, such as bounds on
the diameter and a version of the expander mixing lemma. 

For a $d$-regular graph, if $\lambda_i$ is the $i$-th largest eigenvalue of the
adjacency matrix, then $1-\lambda_i/d$ is the $i$-th smallest eigenvalue of the
normalized Laplacian matrix. If we denote by $0 = \lambda_1(\bar L) \leq
\lambda_2(\bar L) \leq \cdots \lambda_n (\bar L)$ the normalized Laplacian eigenvalues of a $d$-regular graph,
the Alon-Boppana bounds become: 

\begin{equation} \label{ab.normalized} \lambda_2(\bar L) \leq 1- 2\frac
{\sqrt{d-1}}d + o_n(1), \ \ \  \lambda_n(\bar L) \geq 1+  2\frac {\sqrt{d-1}}d -
o_n(1) \ .  \end{equation}

It might be natural to conjecture that the above bounds hold also for irregular
graphs, if we let $d$ be the {\em average degree}, thus putting a limit to the
expansion of sparse graphs, regardless of degree sequence. Young \cite{Young11},
however, shows that this is not the case, and he exhibits families of graphs of
average degree $d$ such that  $\lambda_2(\bar L) \leq 1- 2\frac {\sqrt{d-1}}d -
\epsilon$, where $\epsilon> 0$ depends on $d$ but not on the size of the graph.
It would interesting to see if \eqref{ab.normalized} holds for irregular graphs
with an error term $o(1/\sqrt d)$ dependent on $d$.  Young \cite{Young11} and
Chung \cite{Chung16} prove Alon-Boppana type bounds for irregular (unweighted) graphs based
on a parameter that depends on the first two moments of the degree distribution
but is in general incomparable to \eqref{ab.normalized}. 

\subsubsection*{Spectral Sparsifiers} The notion of {\em spectral
sparsification} of graphs can also be seen as a generalization of the notion of
expansion to graphs that are weighted and not necessarily regular. Recall that a
(weighted, not necessarily regular) graph $G$ is called a $(1+\epsilon)$
spectral sparsifier  of $G'$ if $G$ has the same set of vertices and a weighted
subset\footnote{Note that one can consider sparsifiers which use edges outside
$G$. However, in all known constructions and in many applications $G'$ is required to
be a subset of $G$, so we take this as part of the definition, since it is
necessary for our lower bound.}  of the edges of $G$ and 
\[ L(G') \preceq L(G) \preceq (1+\epsilon) \cdot  L(G') \]
where $L(G)$ is the (non-normalized) Laplacian matrix $D - A$ of $G$. This
notion, introduced by Spielman and Teng \cite{ST04}, strengthens the notion of
{\em cut sparsifier} defined by  Bencz\'ur and Karger \cite{BK96}. It can be
seen as a generalization of the notion of expander, because if $K$ is clique and
$G$ is a $(1+\epsilon)$-sparsifier of $K$, then $G$ has several of the useful
properties of expander graphs, and it satisfies a version of the expander mixing
lemma. Since the Laplacian of any clique is a multiple of the identity
orthogonal to the all ones vector, it is
easy to see that $G$ is a $(1+\epsilon)-$spectral sparsifier of a clique if and
only if
\begin{equation}\label{eq.relative}\frac {\lambda_n(L(G))}{\lambda_2(L(G))} \leq 1 +
\epsilon.\end{equation}
Thus, another notion of expansion for irregular weighted graph is to
consider the {\em relative} range of the non-trivial eigenvalues of the
unnormalized Laplacian matrix.

Batson, Spielman and Srivastava \cite{BSS12} showed
that for every $G'$ there is a $(1+\epsilon)$ sparsifier $G$ of average degree
$d$ (i.e., $dn/2$ edges) such that $\epsilon \leq \frac {4\sqrt 2}{\sqrt d} + O \left( \frac 1d
\right)$. 
However, their work left a gap in our understanding of the precise dependence of
$\epsilon$ on $d$: they proved that it is not possible to do better than
$\epsilon\approx \frac{2}{\sqrt{d}}$ and conjectured that this could be improved
to $\frac{4}{\sqrt{d}}$. The number $1+\frac{4}{\sqrt{d}}$  corresponds to the ``Ramanujan'' bound
obtained by approximating the complete graph by a Ramanujan graph $R_d$, since
for such a graph we have:

\[ \frac {\lambda_n(L(R_d))}{\lambda_2(L(R_d)) } \leq \frac {d + 2\sqrt{d-1}}{d-2\sqrt{d-1}} 
\leq 1 + \frac 4{\sqrt d} + O \left( \frac 1d\right), \]
which is also best possible for unweighted regular graphs up to $o_n(1)$ terms by the Alon-Boppana bound.

Thus, \cite{BSS12} called their construction a ``twice Ramanujan
sparsifier'' because, when applied to a clique, it has twice the number of edges
($dn$ instead of $dn/2$)
of a Ramanujan graph for the same $(1+\frac4{\sqrt{d}})$-approximation. Equivalently,
if one applies their construction to create a $(1+\epsilon)$-sparsifier of the
clique of average combinatorial\footnote{For weighted graphs, the term
``degree'' can be ambiguous, so from this point forward we will call the number
of edges incident on a vertex the {\em combinatorial} of the vertex, and we will
call the total weight of the edges incident on a vertex the {\em weighted}
degree of the vertex.}  degree $d$, then one obtains $\epsilon$ that is a
factor of $\sqrt 2$ off from what would have been possible using a true
$d-$regular Ramanujan graph.

\subsection{Our Results} 

\subsubsection{An Alon-Boppana-type Bound on $\lambda_n/\lambda_2$}

Our work clarifies the dependence on $d$ in the Spectral Sparsification context
described above.
We prove the following Alon-Boppana type lower bound on  $\lambda_n  /
\lambda_2$ on the Laplacian matrices of weighted graphs with moderately large girth.

\begin{theorem} \label{th.ab} Let $G$ be a weighted unidrected graph with $n$ vertices
and $dn/2$ edges. Let $\lambda_1 \leq \cdots \le \lambda_n$ be the eigenvalues of the non-normalized Laplacian matrix of $G$. If
the girth of $G$ is at least $2d^{1/8}+1$, then
\[ \frac {\lambda_n}{\lambda_2} \geq 1 + \frac 4{\sqrt d} - O \left( \frac 1 {d^{5/8}} \right) - O \left( \frac 1n \right)  \]
\end{theorem}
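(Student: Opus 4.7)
I would adapt the Nilli test-vector proof of Alon-Boppana to the weighted setting. Set $k := \lfloor d^{1/8}\rfloor$: since $g>2k+1$, for every vertex $r\in V(G)$ the ball $B(r,k)$ is a tree. Let $N_j(r)$ be the number of vertices at distance exactly $j$ from $r$, and $\partial_j(r)$ the total weight of edges joining vertices at distances $j-1$ and $j$ from $r$. By the girth bound there are no edges between two vertices at the same distance from $r$, and each non-root vertex at distance $j\le k$ has a unique neighbor at distance $j-1$; so the action of $L(G)$ on any function supported on $B(r,k)$ and constant on each distance shell is fully determined by the sequences $(N_j(r))$ and $(\partial_j(r))$.

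The plan is to exhibit two test functions of the form $y(v):=f(\dist(r,v))\cdot\bone_{B(r,k)}(v)$ for a well-chosen root $r$: a ``smooth'' $f$ whose Rayleigh quotient upper-bounds $\lambda_2$, and an ``oscillating'' $g$ whose Rayleigh quotient lower-bounds $\lambda_n$. The tree structure yields
\[
\frac{y^{T}L(G)y}{y^{T}y} \;=\; \frac{\sum_{j=1}^{k}\partial_j(r)\bigl(f(j-1)-f(j)\bigr)^{2} + \partial_{k+1}(r)\,f(k)^{2}}{\sum_{j=0}^{k} N_j(r)\,f(j)^{2}},
\]
and an analogous formula for $g$. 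Mimicking the classical Alon-Boppana choice, I take tapered Chebyshev-type test functions $f(j)\approx\sin\!\bigl(\pi(k+1-j)/(k+1)\bigr)(d-1)^{-j/2}$ and $g(j)=(-1)^{j}f(j)$, so that $f(k)\approx 0$ suppresses the otherwise uncontrolled boundary term $\partial_{k+1}(r)\,f(k)^{2}$, and on a $d$-regular tree of depth $k$ the Rayleigh quotients evaluate to $d-2\sqrt{d-1}+O(\sqrt d/k)$ and $d+2\sqrt{d-1}-O(\sqrt d/k)$, respectively. After subtracting the mean of $y$ to enforce $y\perp\bone$ (incurring only an $O(|B(r,k)|/n)=O(1/n)$ relative error), dividing gives
\[
\frac{\lambda_n}{\lambda_2} \;\ge\; \frac{d+2\sqrt{d-1}}{d-2\sqrt{d-1}} - O\!\left(\frac{1}{k\sqrt d}\right) - O\!\left(\frac{1}{n}\right) \;=\; 1 + \frac{4}{\sqrt d} - O\!\left(\frac{1}{d^{5/8}}\right) - O\!\left(\frac{1}{n}\right).
\]

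\textbf{Main obstacle.} The evaluation above takes for granted that the weighted branching ratios $\partial_j(r)/N_{j-1}(r)$ are close to $d-1$, which is automatic in the $d$-regular case but can fail arbitrarily for any fixed vertex of a weighted graph. The heart of the proof must therefore be an averaging argument over $r\in V$: double-counting each edge of $G$ by the set of roots $r$ that place it at level $j$ of the tree around $r$, together with the global bound $|E(G)|=dn/2$, should yield that the \emph{average} branching weight over $r$ is exactly $d-1$. A Markov-type pigeonhole over $r$ and over the $k$ levels then selects a root whose $k$ level-ratios simultaneously deviate from $d-1$ by at most $O(1/k)=O(d^{-1/8})$, and this slack propagates through the Rayleigh-quotient computations to produce the claimed $O(d^{-5/8})$ error. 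A preliminary filtering of vertices of abnormally high combinatorial degree via Markov's inequality on $\sum_v\deg(v)=dn$ will likely be needed so that balls $B(r,k)$ are populated but occupy only a negligible fraction of the graph.
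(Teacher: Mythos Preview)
Your high-level architecture (two test vectors supported on a depth-$k$ tree, one smooth and one sign-alternating, then average over the root) matches the paper, but the specific test function you propose does not work, and the ``main obstacle'' you identify cannot be repaired by the averaging/pigeonhole you sketch.

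The difficulty is that your $y$ is \emph{constant on distance shells} with the decay $(d-1)^{-j/2}$ hard-wired. This forces the denominator $\sum_j N_j(r)(d-1)^{-j}$ to be balanced only when $N_j(r)\approx (d-1)^j$, and the numerator to behave only when $\partial_j(r)$ matches; these are combinatorial/weighted constraints that need not hold even approximately at \emph{any} root of a weighted graph of average degree $d$. Your proposed fix---double-count to get that the ``average branching weight is $d-1$'' and then Markov/pigeonhole over $r$ and over levels---does not go through: averaging $\sum_r \partial_j(r)$ and $\sum_r N_{j-1}(r)$ separately gives no control on the ratio $\partial_j(r)/N_{j-1}(r)$ for any individual $r$, and even if each level were typically good, intersecting $k$ Markov events at slack $O(1/k)$ costs a factor of $k$ and eats the entire gain. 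There is also no mechanism in your plan relating the weighted quantity $\partial_j$ to the unweighted count $N_j$, which is what your Rayleigh-quotient computation silently assumes.

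The paper's key idea is to abandon shell-constant test functions and instead set
\[
f_r(v)=\sqrt{\prod_{e\in \text{path}(r,v)} w(e)},
\]
so that $f_r(v)^2$ is (up to a $(1-O(1/\sqrt d))^k$ factor, after first reducing to the case where all weighted degrees lie in $[1-4/\sqrt d,1]$ and all edge weights are at most $4/\sqrt d$) the probability that the simple random walk from $r$ reaches $v$ at time $\mathrm{dist}(r,v)$. This makes $\sum_{v:\mathrm{dist}(r,v)=j} f_r(v)^2\approx 1$ for \emph{every} $j$ and \emph{every} $r$, so $\|f_r\|^2\approx k+1$ deterministically and no pigeonhole on the denominator is needed. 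Only $f_r^T W f_r$ must be averaged; choosing $r$ from the stationary distribution turns this into $2k\cdot \E\sqrt{w(X_0,X_1)}$ for a stationary edge $(X_0,X_1)$, and convexity of $x\mapsto x^{3/2}$ together with $|E|=dn/2$ yields the $2k/\sqrt d$ lower bound. The alternating-sign vector $g$ is then $f_r$ with signs flipped by level, exactly as you suggest. In short, the averaging happens on the quadratic form, not on branching ratios, and the test function is designed so that its norm is automatically controlled.
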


This result shows that the dependence of $\epsilon$ on $d$ in spectral
sparsification cannot be better than $1+\frac{4}{\sqrt{d}}$ up to lower order
terms in $d$, as follows. Let $G'_n$ be a family of $D_n$-regular graphs such
that all the non-trivial Laplacian eigenvalues are in the range $D_n\cdot (1\pm
o_n(1))$ and with girth going to infinity (the LPS expanders \cite{LPS88} have
this property). Then any $(1+\epsilon)$-spectral sparsifier $G_n$ of $G'_n$ of average
degree $d$ must have girth greater than $d^{1/8}$ for sufficiently large $n$, so
our theorem implies that $\epsilon \geq 4/\sqrt d - O(1/d^{5/8})$, whence $G_n$
cannot be a better than $(1+4/\sqrt{d}-o_n(1)-o(1/\sqrt{d}))$-sparsifier of
$G'_n$. This improved bound implies that the ``Ramanujan'' quality approximation remains optimal
in the broader category of weighted graphs --- previously \cite{BSS12}, it was conceivable that it is somehow
possible to achieve $1+2/\sqrt{d}$ using variable weights. 

Our proof of Theorem \ref{th.ab} involves the construction of two test functions
$f: V \to \R$ and $g: V \to \R$ and we use the Rayleigh quotient of $f$ to bound
$\lambda_2$ and the Rayleigh quotient of $g$ to bound $\lambda_n$. In our
construction, we have $|f(v)| = |g(v)|$ for all $v$ and $f(v)\ge 0$, and so $||f||^2 = ||g||^2$
and  the ratio of their Rayleigh quotients is simply 

\[ \frac {g^T Lg}{f^T L f}
= 1 + \frac {f^TAf - g^TAg}{f^TD f - f^TAf} \geq 1 + \frac {f^TAf - g^TAg}{f^TD
f} =   1 + \frac {f^TAf }{f^T D f}- \frac {g^TAg}{g^TD g}  \ , \]
where we use the fact that, for our definition of $f$ and $g$, we have  $f^T D f = g^T D g$.
In the standard proof of Alon-Boppana, one picks a start vertex $r$ and a cutoff
parameter $k$, and then one defines the test function $f$ such that $f(v) =
(d-1)^{-\ell/2}$, where $\ell$ is the distance from $r$ to $v$, for all vertices
$v$ at distance $\leq k$ from $r$; we set $f(v) = 0$ for vertices at distance
more than $k$ from $r$. In the analysis, one notes that vertices $v$ at distance
between $1$ and $k$ from $r$ contribute $2\sqrt {d-1} f^2(v)$ to the quadratic
form $f^T A f$ and contribute $f^2(v)$ to $||f||^2$, which is how one argues
that the $f^T A f / ||f||^2$ is at least about $2\sqrt {d-1}$.

In our construction, we pick a parameter $k$ smaller than the girth, we pick an
initial vertex $r$ at random, and we also define $f$ (and $g$) so that only
vertices at distance $\leq k$ from $r$ are nonzero in $f(\cdot)$. If $v$ is at
distance $\ell\leq k$ from $r$, the standard Alon-Boppana proof defines  $f(v)$
as being the square root of the probability of reaching $v$ in $\ell$ steps in a
non-backtracking random walk started at $r$ provided that, as in our case, one
assumes that the girth of the graph is more than $k$. Our definition in the
weighted case is similar but simpler to work with: we normalize weights so that
the maximum weighted degree is $1$, and we define $f(v)$ as the square root of
the product of the weights in the unique shortest path  from $r$ to $v$. Using
the facts that $f^2(v)$ is close to the probability of going from $r$ to $v$ in
a standard random walk (in which edges are picked proportionally to their weight),
that such a random walk is likely to be non-backtracking,  and that the random
walk in $G$ has a stationary distribution that is close to uniform, we relate
the contribution of an edge $(u,v)$ to $f^T A f$, averaged over random $r$, to
the average of $w(u,v)^{3/2}$ over all edges in the graph.  Finally, a convexity
argument shows that, up to lower order terms, this average is at least about $2
||f ||^2 / \sqrt d$ given that there are only $dn/2$ edges. A similar argument
applies to the construction of $g$, showing that one can have $g^T A g$  be at
most $-2 ||g||^2 / \sqrt d$ up to lower order terms. Finally one notes that $f^T
D f = g^T D g$ and $||f||^2 = ||g||^2$ are approximately the same.

\subsubsection{A Lowerbound for the \OSS\ Problem}

Our Alon-Boppana result shows that the best possible approximation achievable by
spectral sparsifiers with $dn/2$ edges cannot be
better than $1+\frac{4}{\sqrt{d}}$ up to lower
order terms in $d$. On the other hand, the result of \cite{BSS12} shows the
existence of sparsifiers the same number of  edges and error
$1+\frac{4\sqrt{2}}{\sqrt{d}}+o_d(1)$. It is natural to ask what the right dependence is, and
whether the constant $4\sqrt{2}$ can be improved to $4$ in general, or vice
versa. In this section, we note that the BSS algorithm actually solves a
more general problem, which we call \OSS, and we 
show that the best possible constant for that problem is $4\sqrt{2}$. Thus any improvement
on the density of spectral sparsifiers, if at all possible, will have to come from
an approach that does not also solve the \OSS\ problem. 

The \OSS\ problem is defined as follows. The player is given parameters $m,n$
and a number of rounds $T=dn/2$ in advance, and in each round $t=1,\ldots,T$ presented with a
collection of vectors $v^{(t)}_1,\ldots,v^{(t)}_m\in\R^n$ which are {\em
isotropic}, meaning:
$$\sum_{i=1}^m \vt_i(\vt_i)^T = I_n,$$
but can otherwise be chosen adversarially, depending on past actions. At each
time $t$ the player must choose an index $i(t)$ and a scaling $s_t$. The goal is to minimize the
condition number of the sum:
$$ A_T:= \sum_{t\le T} s_t \vt_{i(t)}(\vt_{i(t)})^T.$$

Although the theorem of \cite{BSS12} is stated for a fixed (static) set of
vectors, it is easy to see that the analysis
of the BSS algorithm allows one to change the set of vectors adversarially 
in every iteration, and an immediate consequence of the proof is the following.
\begin{theorem}\cite{BSS12} \label{th.onlinebss} There is a polynomial time online strategy which
solves the \OSS\ problem with $dn/2$ rounds with condition number at most 
$$\kappa_d := \frac{(\sqrt{d/2}+1)^2}{(\sqrt{d/2}-1)^2}=1+\frac{4\sqrt{2}}{\sqrt{d}}+O(1/d).$$
\end{theorem}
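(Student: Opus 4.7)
The plan is to verify that BSS's barrier-function argument localizes to a single round, so that replacing the static vector set with an adversarial isotropic set in each round changes nothing in the analysis. Following BSS, maintain two barrier potentials on the running sum $A_t = \sum_{\tau \le t} s_\tau v^{(\tau)}_{i(\tau)} (v^{(\tau)}_{i(\tau)})^T$,
$$\Phi^u_t = \mathrm{tr}((u_t I - A_t)^{-1}), \qquad \Phi^l_t = \mathrm{tr}((A_t - l_t I)^{-1}),$$
with barriers $u_t = u_0 + t\delta_U$ and $l_t = l_0 + t\delta_L$ shifted by fixed amounts per round. Initialize $A_0 = 0$ with $-l_0, u_0 = \Theta(n)$ so that the initial potentials are $O(1)$. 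In each round $t$, choose $(i(t), s_t)$ with $s_t > 0$ so that $A_t$ stays strictly between $l_t I$ and $u_t I$ and neither potential increases.

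Existence of such a choice follows from the one-step lemma of BSS. By Sherman--Morrison, the increase in $\Phi^u$ under $A_{t-1} \to A_{t-1} + s vv^T$ combined with the barrier shift $u_{t-1}\to u_t$ is controlled by an expression of the form $v^T M^U v$ with an explicit PSD matrix $M^U$ depending only on $A_{t-1}$ and the scalar barrier parameters; the decrease in $\Phi^l$ is controlled analogously by $v^T M^L v$. Any $v$ with $v^T(M^L - M^U) v \ge 0$ then admits a scaling $s$ that simultaneously keeps both potentials from increasing. The isotropy of the current vector set yields
$$\sum_i (v^{(t)}_i)^T (M^L - M^U) v^{(t)}_i = \mathrm{tr}(M^L - M^U),$$
and BSS's parameter calibration makes this trace nonnegative as long as the potentials remain below their initial $O(1)$ bounds. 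Hence at least one index $i(t)$ works. The crucial observation is that $M^U$ and $M^L$ depend only on the current $A_{t-1}$ and the scalar barriers---never on vectors from any other round---so the argument transfers verbatim to the online setting.

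Iterating for $T = dn/2$ rounds preserves $l_T I \prec A_T \prec u_T I$, whence the condition number is at most $u_T/l_T$. With the BSS choice $\delta_U = 1 + 2\sqrt{2/d}$ and $\delta_L = 1 - 2\sqrt{2/d}$ (up to $O(1/d)$ corrections) and $u_0 = -l_0 = n$, a direct computation gives $u_T = n(\sqrt{d/2}+1)^2$ and $l_T = n(\sqrt{d/2}-1)^2 - 2n$, so $u_T/l_T = \kappa_d + O(1/d) = 1 + 4\sqrt{2}/\sqrt{d} + O(1/d)$ as desired. Each round needs only rank-one updates to two $n \times n$ inverses and a linear scan over the $m$ candidates, so the strategy is polynomial time. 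The main conceptual step---really the only nontrivial one---is the locality claim above; I expect it to follow by direct inspection of BSS's arguments, consistent with the authors' characterization of this theorem as an ``immediate consequence'' of that proof.
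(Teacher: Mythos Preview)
Your proposal is correct and matches the paper's own treatment: the paper does not give a proof at all, merely asserting that the result is ``an immediate consequence'' of the BSS analysis because that analysis works one round at a time. Your write-up correctly isolates the reason --- the barrier matrices $M^U,M^L$ depend only on the current $A_{t-1}$ and the scalar barrier parameters, so isotropy of the \emph{current} round's vector set is all that is needed for the averaging step --- which is exactly the locality observation the authors have in mind. The only quibble is that your specific initialization $u_0=-l_0=n$ yields $u_T/l_T=\kappa_d+O(1/d)$ rather than $\kappa_d$ on the nose; to hit $\kappa_d$ exactly one takes the BSS initialization $u_0=n/\epsilon_U$, $l_0=-n/\epsilon_L$ with $\epsilon_U=(\sqrt{d/2}-1)/(\sqrt{d/2}+1)$ and $\epsilon_L=1$, but this does not affect the stated $1+4\sqrt{2}/\sqrt{d}+O(1/d)$ conclusion.
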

The corresponding result for spectrally sparsifying graphs $G$ follows by applying this strategy
to the fixed set of vectors $\{L_G^+(e_i-e_j)\}_{ij\in E}$.

Our second contribution is to show that the BSS algorithm is optimal for this
more general problem.
\begin{theorem}\label{th.online} There is no strategy for \OSS\ with $dn/2$
rounds which achieves condition number better than $\kappa_d-o_n(1)$.\end{theorem}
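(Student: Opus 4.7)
My plan is to construct an adaptive adversary for \OSS{} via a barrier-potential argument dual to the one used for the upper bound in Theorem~\ref{th.onlinebss}. Introduce two BSS-style barriers $u_t > l_t > 0$ tracking the extent of the spectrum of $A_t$, with associated potentials
\[
\Phi^{u}(A) = \mathrm{tr}\left((uI - A)^{-1}\right), \qquad \Phi_{l}(A) = \mathrm{tr}\left((A - lI)^{-1}\right).
\]
The analysis of Theorem~\ref{th.onlinebss} shows that, for every isotropic collection, the player \emph{can} pick $(v,s)$ keeping both potentials bounded while advancing $u_t$ and $l_t$ at rates $\delta_{U},\delta_{L}$ whose ratio tends to $\kappa_d$. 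My goal is to present, at each round, an isotropic collection for which \emph{every} legal choice attains essentially this tight trade-off, so that the player can do no better than $\kappa_d - o_n(1)$.

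\textbf{Adversary.} At round $t$, given $A_{t-1}$ and the BSS-optimal barriers $u_{t-1}, l_{t-1}$, the adversary presents an isotropic collection $\{v_i^{(t)}\}$ of vectors that are ``balanced'' with respect to the two barrier operators: there exist $\alpha_t,\beta_t$, depending only on the barriers and $d$, such that for every $i$
\[
(v_i^{(t)})^T (u_{t-1}I - A_{t-1})^{-1} v_i^{(t)} = \alpha_t, \qquad (v_i^{(t)})^T (A_{t-1} - l_{t-1}I)^{-1} v_i^{(t)} = \beta_t,
\]
with $\alpha_t,\beta_t$ calibrated so that the Sherman--Morrison update of the two potentials gives the BSS-tight ratio. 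Because these are two affine constraints on each rank-one matrix $v_i v_i^T$, and the set of isotropic decompositions is an affine slice of the PSD cone of dimension $\Theta(n^2)$, a dimension count guarantees that an isotropic balanced collection exists for all $n$ sufficiently large, and one can construct it explicitly from any orthonormal basis followed by a symmetric perturbation that equalizes the two quadratic forms.

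\textbf{Analysis and main obstacle.} With this adversary in place, every player update $(v,s_t)$ produces the same Sherman--Morrison formulas for $\Phi^u$ and $\Phi_l$ up to the scaling $s_t$, so requiring both potentials to remain bounded reduces to a fixed one-parameter optimization in $s_t$ whose best feasible trade-off is exactly the BSS barrier-shift ratio. Iterating over $T = dn/2$ rounds gives $u_T/l_T \ge \kappa_d - O(1/\sqrt d) - O(1/n)$, and the bounded potentials then yield $\lambda_{\max}(A_T) \ge u_T(1-o(1))$ and $\lambda_{\min}(A_T) \le l_T(1+o(1))$, which produces the claimed condition number lower bound. The main obstacle I anticipate is verifying that the balanced isotropic collection can always be realized and that it genuinely forces the BSS trade-off for \emph{all} admissible scalings $s_t$ (including very large or, if permitted, negative ones); ruling out ``escape'' moves requires a careful monotonicity analysis of the Sherman--Morrison update applied to both barriers simultaneously, along with a compactness argument on the affine slice of isotropic decompositions to ensure the balanced collection is nondegenerate at every round.
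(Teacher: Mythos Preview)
Your adversary is essentially the one the paper uses: vectors equalizing $v_i^T M v_i$ for every $M$ diagonalized by the eigenbasis of $A_{t-1}$. The paper builds this explicitly as $v_i^{(t)} = U h_i$, where $U$ diagonalizes $A_{t-1}$ and the $h_i$ are columns of a normalized Hadamard matrix, so $\langle v_i, u_j\rangle^2$ is the same constant for all $i,j$; this is exactly your ``balanced'' property and simultaneously gives isotropy, so no dimension count is needed. The gap is in your analysis. You say that ``requiring both potentials to remain bounded reduces to a fixed one-parameter optimization in $s_t$,'' but the player is under no obligation to keep your barrier potentials bounded---the potentials and the barriers $u_t,l_t$ are your analysis device, not a rule of the game. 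A player could let $\Phi^u$ or $\Phi_l$ blow up at intermediate rounds and still finish with small condition number; dualizing the BSS argument requires ruling this out for \emph{every} sequence of scalings, and a round-by-round Sherman--Morrison trade-off tied to barriers the player ignores does not do that.

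The paper's analysis avoids potentials entirely. With the Hadamard adversary, the matrix-determinant lemma gives $p_{t}(x) = (1-(s_t/n)D)\,p_{t-1}(x)$ independent of the player's vector choice, so $p_T(x) = \prod_t (1-(s_t/n)D)\,x^n$ depends only on the multiset $\{s_t\}$. The remaining question---why every choice of scalings yields condition number at least $\kappa_d$---is handled by a majorization theorem of Borcea and Br\"anden: the linear map $\prod_t(x-\alpha_t) \mapsto \prod_t(1-\alpha_t D)\,x^n$ preserves real-rootedness and hence root majorization, so the roots of $p_T$ always majorize those of the equal-scaling polynomial $(1-(S/T)D)^T x^n$, an associated Laguerre polynomial whose extreme roots have Marchenko--Pastur asymptotics with ratio exactly $\kappa_d$. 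That global majorization step, controlling all scaling sequences simultaneously, is the missing idea your barrier outline has no substitute for.
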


The conceptual point of this theorem is that achieving the true ``Ramanujan''
type bound of $1+\frac{4}{\sqrt{d}}$ will require an algorithm/analysis
which exploits one or both of the following facts: (1) the vectors are static
(2) the vectors have special structure, namely, they are (scaled) incidence
vectors of edges in a graph. It is conceivable that the online vector problem,
the offline vector problem, and the spectral graph sparsification problem are
all equally hard, or that each is strictly harder than the next. 

\section{Preliminaries}

Let $G= (V,E)$ be a weighted undirected graph, and $w(u,v)$ be the weight of edge $\{u,v\}$.
We refer to the {\em distance} between two vertices as the minimum number of edges in a path between them (that is, their unweighted shortest path distance).
The weighted degree of $u$ is defined as $w(u):= \sum_v w(u,v)$. The combinatorial degree
of $u$ is the number of edges incident on $u$ of nonzero weight.
If $W$ is the weighted adjacency matrix of $G$ (that is, $W_{u,v} = w(u,v)$) and $D$ is the 
diagonal matrix such that $D_{v,v}=w(v)$ is the weighted degree of $v$, then $L : = D-A$ is the
Laplacian matrix of $G$.

We identify vectors in $\R^V$ with functions $V \to \R$. The quadratic form of $L$ is

\[ f^T L f = \sum_{ \{ u,v \} } w(u,v)  \cdot ( f(u) - f(v))^2 \]

If we let $\lambda_1 \leq \lambda_2 \leq \cdots \lambda_n$ be the eigenvalues of $L$,
counted with multiplicities and ordered non-decreasingly, then 

\[ \lambda_2 = \min_{f \perp \bone} \  \ \frac{ f^T L f }{||f||^2} \]
\[ \lambda_n = \max_f \  \ \frac{ f^T L f }{||f||^2} \]

Without loss of generality, we may assume that the maximum weighted degree of $G$ is 1, because multiplying all edge weights by the same constant does not change the ratio $\lambda_n / \lambda_2$.

Next we observe that, without loss of generality, every node of $G$ has combinatorial degree $\geq \frac d4$, that the minimum weighted degree is at least $1-4/\sqrt d$ times the maximum weighted degree, and that every edge has weight at most $4/\sqrt d$. 

\begin{claim} Suppose that $G$ has a node of combinatorial degree $< d/4$. Then
\[ \frac {\lambda_n}{\lambda_2} \geq 1 + \frac 4{\sqrt d} - O \left( \frac {\sqrt d}{n} \right) \]
\end{claim}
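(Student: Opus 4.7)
My plan is to bound $\lambda_2$ from above using a test function concentrated at $v$, and combine this with an elementary lower bound on $\lambda_n$ coming from the normalization $w_{\max}=1$.

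Take $f := e_v - (1/n)\bone$, which is orthogonal to $\bone$. Using $L\bone = 0$, direct computation gives $f^T L f = e_v^T L e_v = w(v)$ and $\|f\|^2 = 1 - 1/n$, so by the variational characterization
\[
\lambda_2 \leq \frac{n}{n-1}\cdot w(v).
\]
Taking the indicator of any vertex of weighted degree $1$ as a test function for $\lambda_n$ gives $\lambda_n \geq 1$, so
\[
\frac{\lambda_n}{\lambda_2} \geq \frac{n-1}{n\cdot w(v)}.
\]

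The argument then splits on $w(v)$. When $w(v) \leq 1 - 4/\sqrt d$, the displayed inequality already gives $\lambda_n/\lambda_2 \geq (1-1/n)(1+4/\sqrt d + O(1/d)) \geq 1+4/\sqrt d - O(1/n)$, which is stronger than needed. The remaining range $w(v) \in (1 - 4/\sqrt d, 1]$ is the interesting one: here $v$'s $k < d/4$ edges concentrate total weight exceeding $1 - 4/\sqrt d$, so by averaging some edge $(v,u)$ has weight at least $4(1 - 4/\sqrt d)/d$. I would then strengthen the lower bound on $\lambda_n$ via Cauchy interlacing with the principal submatrix of $L$ indexed by $\{v\}$ together with its neighbors: the induced \emph{star} Laplacian on this submatrix has top eigenvalue at least $w(v)(k+1)/k$, and combining this with the $\lambda_2$ estimate should yield the claimed inequality up to the stated $O(\sqrt d/n)$ error.

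The main obstacle is the sub-case $w(v) \approx 1$ with $k$ close to $d/4$: here the star bound gives only $\lambda_n \geq 1 + 1/k = 1 + O(1/d)$, a factor of $\sqrt d$ short of the $1 + 4/\sqrt d$ target. Closing this gap almost certainly requires using the global edge-count constraint $|E(G)| = dn/2$ to show that the concentration of weight at $v$ is incompatible with $w(v)$ being so close to $1$ without introducing additional structure (e.g.\ another vertex of relatively small weighted degree) that can be fed back into the first test function. I expect the $\sqrt d$ factor in the error term to arise precisely from the slack in this balancing step.
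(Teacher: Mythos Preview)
The paper does not actually prove this claim: its entire proof is the sentence ``This is proved in \cite{BSS12}.'' So there is no in-paper argument to compare against; the intended argument lives in the cited reference.

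Your treatment of the case $w(v)\le 1-4/\sqrt d$ is correct and is essentially the content of the paper's \emph{next} claim. The genuine issue is exactly the regime you flag: $v$ has combinatorial degree $k<d/4$ but weighted degree $w(v)$ close to $1$. There your star/interlacing argument gives $\lambda_n\ge w(v)(1+1/k)$, which combined with $\lambda_2\le w(v)\cdot n/(n-1)$ yields only
\[
\frac{\lambda_n}{\lambda_2}\;\ge\;1+\frac1k-O\!\left(\frac1n\right)\;>\;1+\frac4d-O\!\left(\frac1n\right),
\]
a full factor of $\sqrt d$ short of the claimed $1+4/\sqrt d$.

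Your proposed fix --- invoking the global constraint $|E|=dn/2$ to force additional usable structure elsewhere --- does not work. The hypotheses ``max weighted degree $1$'', ``$dn/2$ edges'', and ``some vertex $v$ with $k\approx d/4$ and $w(v)\approx 1$'' are mutually consistent: the rest of the graph can absorb the slight edge deficit at $v$ without producing any other vertex of small weighted degree, so there is no second test function waiting to be found. The $\sqrt d$ gap is not slack in a balancing step; it reflects that the star bound $\lambda_n\ge w(v)(1+1/k)$ is simply too weak. The BSS lower-bound argument extracts a factor of order $1/\sqrt{k}$ (not $1/k$) from a degree-$k$ vertex --- this is why their clique lower bound is $1+2/\sqrt d$ rather than $1+1/d$, and why the present paper's threshold is $d/4$ rather than $\sqrt d/4$. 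You should consult that argument directly; it requires a more careful analysis of the quadratic form on the span of $\{e_v-e_{u_i}\}$ than the single Rayleigh-quotient test you use.
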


\begin{proof}
This is proved in \cite{BSS12}. 
\end{proof}

\begin{claim} Suppose that $G$ has a node of weighted degree $\leq 1 - 4/\sqrt d$. Then
\[ \frac {\lambda_n}{\lambda_2} \geq 1 + \frac 4{\sqrt d} - O \left( \frac 1n \right) \]
\end{claim}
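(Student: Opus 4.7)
The plan is to exhibit very simple test vectors — single-coordinate indicator vectors — in the variational characterizations of $\lambda_2$ and $\lambda_n$. The key identity is that for any vertex $v$, if $e_v \in \R^V$ denotes its indicator, then
\[ e_v^T L e_v \;=\; \sum_{\{u,v'\} \in E} w(u,v')\,(e_v(u) - e_v(v'))^2 \;=\; w(v), \]
since only edges incident to $v$ contribute, each contributing its own weight. Thus Rayleigh quotients of such indicators recover exactly the weighted degrees, which is precisely the data the hypothesis provides.

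For the upper bound on $\lambda_2$, let $u^*$ be the vertex granted by the hypothesis, so $w(u^*) \le 1 - 4/\sqrt d$, and use the test function $f := e_{u^*} - \tfrac{1}{n}\bone$, which satisfies $f \perp \bone$. Since $L \bone = 0$, the cross terms vanish and $f^T L f = e_{u^*}^T L e_{u^*} = w(u^*)$, while $\|f\|^2 = 1 - 2/n + 1/n = (n-1)/n$; hence
\[ \lambda_2 \;\le\; \frac{f^T L f}{\|f\|^2} \;\le\; \frac{n}{n-1}\left(1 - \frac{4}{\sqrt d}\right). \]
For the lower bound on $\lambda_n$, let $v^*$ be a vertex of maximum weighted degree $w(v^*) = 1$ (such a vertex exists by the normalization that the maximum weighted degree equals $1$). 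Using $g := e_{v^*}$ without any projection gives $\lambda_n \ge g^T L g / \|g\|^2 = 1$.

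Combining the two bounds and applying the elementary inequality $1/(1-x) \ge 1 + x$ for $x \in (0,1)$ (we may assume $4/\sqrt d < 1$, else the hypothesis forces an isolated vertex and the claim is trivial),
\[ \frac{\lambda_n}{\lambda_2} \;\ge\; \frac{n-1}{n}\cdot\frac{1}{1 - 4/\sqrt d} \;\ge\; (1 - 1/n)(1 + 4/\sqrt d) \;=\; 1 + \frac{4}{\sqrt d} - O\!\left(\frac{1}{n}\right), \]
which is the claim. There is essentially no obstacle: the whole argument reduces to the identity $e_v^T L e_v = w(v)$ combined with the gap between the max-degree vertex and $u^*$. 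The conceptual content is that a single anomalous vertex already suffices to open a $4/\sqrt d$ gap in the eigenvalue ratio, which is why the hypothesis can be so weak.
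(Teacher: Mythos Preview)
Your proof is correct and is essentially the same as the paper's: your test vector $f=e_{u^*}-\tfrac1n\bone$ is a scalar multiple of the paper's choice $f(u)=1,\ f(z)=-1/(n-1)$, and your $g=e_{v^*}$ is exactly the paper's $h$. Your write-up is in fact slightly cleaner, since invoking $L\bone=0$ gives $f^TLf=w(u^*)$ directly and spares the explicit edge-by-edge computation the paper carries out.
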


\begin{proof} Let $u$ be a node of weighted degree $\leq 1- \epsilon$ and let $v$ a node of weighted degree $1$.

Define the function $f: V \to \R$ such that $f(u) = 1$ and $f(z) = -1/(n-1)$ for $z\neq u$. Then $f \perp \bone$ and 
\[ \lambda_2 \leq \frac {f^T L f}{||f||^2} \leq \frac 1{||f||^2} \cdot \left( 1- \frac 4 {\sqrt d} \right) \left( 1 + \frac 1 {n-1} \right) \leq 1 -  \frac 4{\sqrt d} + O \left( \frac 1n \right) \]
Then define $h: V \to \R$ such that $h(v) = 1$ and $h(z) = 0$ for $z \neq v$ and observe that
\[ \lambda_n \geq \frac {h^T L h}{||h||^2} = 1 \]
So
\[ \frac {\lambda_n}{\lambda_2} \geq \frac 1 {1 -  \frac 4{\sqrt d} + O \left( \frac 1n \right)} \geq
1 +  \frac 4{\sqrt d} + O \left( \frac 1n \right) \]
\end{proof}

Note also that the above proof establishes
\[ \lambda_2 \leq 1 + O \left( \frac 1n \right) \]
which can also be verified by noting that the trace is at most $n$ and so $\lambda_2$ is at most $n/(n-1)$.

\begin{claim} Suppose that $G$ has an edge $\{ u,v \}$ of weight $> 4/\sqrt d$. Then
\[ \frac {\lambda_n}{\lambda_2} \geq 1 + \frac 4{\sqrt d} - O \left( \frac 1n \right) \]
\end{claim}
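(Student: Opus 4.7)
My plan is to mirror the structure of the preceding claim: upper bound $\lambda_2$ by a carefully chosen test function $f \perp \bone$ that exploits the heavy edge, and lower bound $\lambda_n$ by a test function concentrated at a maximum-degree vertex. The key idea is to place equal values on both endpoints of the heavy edge $\{u,v\}$, so that the heavy edge contributes zero to the Laplacian quadratic form $f^T L f = \sum_{\{a,b\}} w(a,b)(f(a) - f(b))^2$. This effectively removes the large weight $w(u,v) > 4/\sqrt d$ from the numerator of the Rayleigh quotient while leaving $\|f\|^2$ essentially unchanged, forcing $\lambda_2$ to sit noticeably below $1$.

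For the lower bound on $\lambda_n$, I would let $v^\ast$ be a vertex of weighted degree $w(v^\ast) = 1$ (which exists by our normalization) and take $h = e_{v^\ast}$, giving $\lambda_n \geq h^T L h / \|h\|^2 = w(v^\ast) = 1$ exactly as in the proof of the previous claim.

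For $\lambda_2$ I would use $f(u) = f(v) = 1$ and $f(z) = -2/(n-2)$ for $z \notin \{u,v\}$. Then $\sum_z f(z) = 0$, so $f \perp \bone$ and $f$ is a valid test vector. In $f^T L f$, the heavy edge and the edges between vertices outside $\{u,v\}$ contribute $0$, while each edge from $\{u,v\}$ to the remaining vertices contributes its weight times $(1 + 2/(n-2))^2$. Summing, $f^T L f = (w(u) + w(v) - 2w(u,v))(1 + O(1/n))$. The hypotheses $w(u), w(v) \leq 1$ and $w(u,v) > 4/\sqrt d$ give $f^T L f \leq (2 - 8/\sqrt d)(1 + O(1/n))$. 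Since $\|f\|^2 = 2 + 4/(n-2) = 2(1 + O(1/n))$, the Rayleigh quotient yields $\lambda_2 \leq (1 - 4/\sqrt d)(1 + O(1/n)) \leq 1 - 4/\sqrt d + O(1/n)$.

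Dividing and applying $1/(1-x) \geq 1 + x$ to $x = 4/\sqrt d - O(1/n)$ gives $\lambda_n/\lambda_2 \geq 1/(1 - 4/\sqrt d + O(1/n)) \geq 1 + 4/\sqrt d - O(1/n)$, as claimed. I do not expect any real obstacle; the one point worth noting is the degenerate case where $u$ and $v$ are connected only to each other, so that $w(u) = w(v) = w(u,v)$ and $f^T L f = 0$. In that case $G$ is disconnected, $\lambda_2 = 0$, and the ratio $\lambda_n/\lambda_2$ is vacuously $\infty$, so the bound still holds.
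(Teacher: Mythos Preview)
Your argument is correct. It is, however, a genuinely different route from the paper's. The paper exploits the heavy edge on the $\lambda_n$ side: it takes $h(u)=1$, $h(v)=-1$, $h(z)=0$ otherwise, so that the edge $\{u,v\}$ contributes $4w(u,v)>16/\sqrt d$ to $h^TLh$, and combines this with the already-established bound $\lambda_2\le 1+O(1/n)$. You do the dual: you put \emph{equal} values on $u$ and $v$ so that the heavy edge contributes nothing to $f^TLf$, which pushes $\lambda_2$ below $1-4/\sqrt d$, and then pair this with the trivial $\lambda_n\ge 1$ coming from a maximum-degree vertex. Both arguments are short and essentially equivalent in strength; your version has the small advantage that it uses only the normalization $w(u),w(v)\le 1$, whereas the paper's test vector also needs the lower bound $w(\cdot)\ge 1-4/\sqrt d$ established in the preceding claim.
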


\begin{proof}
Let $h$ be such that $h(u) = 1$, $h(v) = -1$ and $h(z) = 0$ for $z \not\in \{ u,v \}$. Then
\[ \lambda_n \geq \frac { h^T L h}{|| h||^2} \geq \frac 1 {||h||^2} \cdot \left( 2\left( 1 - \frac 4{\sqrt d} \right) + 4 \cdot \frac {4}{\sqrt d} \right) \geq 1 + \frac 4 {\sqrt d} \]
\end{proof}

The girth of $G$ is the length (number of edges) of the shortest simple cycle in $G$.

We now notice that a girth assumption, combined with a lower bound on mimum combinatorial degree, implies an upper bound to the number of vertices in small balls.

\begin{claim} \label{claim.ball} Suppose $G$ has minimum combinatorial degree $\geq \frac d4$, that $d\geq 12$,  and that the girth of $G$ is at least $g$. Then, for every vertex $r$, and for every $\ell \leq (g-1)/2$, the number of vertices having distance $\leq \ell$ from $r$
is at most 
\[ \frac {2n}{\left( \frac d4 -1 \right) ^{\frac {g-1} 2 - \ell }} \]
\end{claim}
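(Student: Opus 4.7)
The plan is to control the size of $\ell$-neighborhoods by bounding the growth of the shells $N_i := |\{v : \dist(v,r)=i\}|$ in the spirit of the Moore bound, using the girth hypothesis to rule out short cycles and the lower bound $d/4$ on minimum combinatorial degree to force rapid expansion. Write $L := \lfloor (g-1)/2 \rfloor \ge \ell$; the goal is to show $N_{i+1} \ge (d/4 - 1) N_i$ up to level $L$ and then sum.

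First I would argue that for every vertex $v$ at distance $i$ from $r$ with $1 \le i \le L$, the girth hypothesis forces $v$ to have at most one neighbor at distance $i-1$ from $r$: two such parents $u_1 \ne u_2$ together with shortest paths back to $r$ would close up into a cycle of length at most $2i \le g-1$, violating girth $\ge g$. By the same token, for $1 \le i \le L-1$, $v$ has no ``sideways'' neighbor at distance $i$, as any such edge would close a cycle of length at most $2i + 1 \le g - 1$. Since $v$ has combinatorial degree at least $d/4$ and every neighbor lies at distance $i-1$, $i$, or $i+1$, at least $d/4 - 1$ neighbors must lie at distance exactly $i+1$.

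To convert this into a recursion on shells, observe that each vertex at distance $i+1 \le L$ has a \emph{unique} parent at distance $i$ (by the at-most-one-parent bound applied one level up), so summing the lower bound above over the $N_i$ vertices at distance $i$ counts each vertex at distance $i+1$ exactly once, giving $N_{i+1} \ge (d/4-1) N_i$ for $1 \le i \le L-1$. The trivial inequality $N_1 \ge d/4 \ge (d/4-1) N_0$ extends the recursion to $i=0$. Iterating yields $N_L \ge (d/4-1)^{L-\ell} N_\ell$, and combining with $N_L \le n$ produces $N_\ell \le n/(d/4-1)^{L - \ell}$.

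Finally I would sum the shells. Applying the recursion downwards, $N_i \le N_\ell (d/4-1)^{-(\ell-i)}$ for $0 \le i \le \ell$, so
\[ \bigl|\{v: \dist(v,r)\le \ell\}\bigr| \;=\; \sum_{i=0}^{\ell} N_i \;\le\; N_\ell \sum_{j=0}^{\ell} (d/4 - 1)^{-j} \;\le\; N_\ell \cdot \frac{d/4-1}{d/4-2}. \]
The hypothesis $d \ge 12$ forces $d/4 - 1 \ge 2$, hence $(d/4-1)/(d/4-2) \le 2$, yielding the claimed bound $\le 2n/(d/4-1)^{L-\ell}$. The only real obstacle is the girth bookkeeping at the boundary $i = L$ (where one must track the two different thresholds $2i<g$ and $2i+1<g$); the value $d \ge 12$ enters solely as the threshold at which the geometric-series constant first drops to $2$.
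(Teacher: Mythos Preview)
Your proof is correct and follows essentially the same approach as the paper: establish the shell recursion $N_{i+1}\ge (d/4-1)N_i$ via the girth/minimum-degree hypotheses, anchor it at level $(g-1)/2$ using $N_{(g-1)/2}\le n$, and then sum the geometric series (with the factor $(d/4-1)/(d/4-2)\le 2$ coming from $d\ge 12$). The paper compresses all of this into the one-line observation that the ball of radius $(g-1)/2$ induces a tree with non-leaf degrees $\ge d/4$, and omits the explicit geometric-series step; your version just spells out the cycle-closing arguments and the summation more carefully.
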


\begin{proof} It will be enough to show that the number of vertices at distance exactly $\ell$
is at most $n \cdot \left( \frac d4 -1 \right ) ^{\ell - \frac {g-1} 2}$. Let $s(r,i)$ be the number of vertices
at distance exactly $i$ from $r$. Then, for every $i < (g-1)/2$, we have $s(r,i) \geq \left( \frac d4 -1 \right) \cdot s(r,i-1)$, because the set of nodes at distance $< (g-1)/2$ from $r$ induces a tree in which all the non-leaf vertices have combinatorial degree $\geq d/4$. But $s(r,(g-1)/2) \leq n$, and
so 
\[ s(r,\ell) \cdot \left( \frac d4 - 1 \right) ^{\frac {g-1} 2 - \ell } \leq n \cdot s(r,(g-1)/2) \] 
\end{proof}

\section{Proof of Theorem \protect\ref{th.ab}}

Let $k$ be a parameter smaller than $(g-1)/2$, where $g$ is the girth,  to be set later (looking ahead,
we will set $k$ to be $d^{1/8}$). 

For every vertex $r$, let
$f_r:V\rightarrow\R$ be the function supported on the ball of radius $k$
centered at $r$ defined as follows:
\begin{align*}
	f_r(v) &=\begin{cases}  &0 \quad\textrm{if $\mathrm{dist}(r,v)>k$}\\
 &1 \quad\textrm{if $r=v$}\\
\\ &\sqrt{w(r,v_1)w(v_1,v_2)\ldots w(v_{\ell-1},v)}\quad\\&\textrm{otherwise, where
$r,v_1,\ldots,v_{\ell-1},v$ is the unique path of length $\ell\le k$ from $r$ to $v$}\\
	\end{cases}
\end{align*}

We begin by proving the following facts about $f_r$, which hold for every $r\in V$ and
which we will use repeatedly.

\begin{equation} \label{eq.fnorm}
\left( {1-\frac{8}{\sqrt{d}}} \right)^k \cdot (k+1) \le \|f_r\|_2^2 \le k+1
\end{equation}

\begin{equation} \label{eq.proj}
\|f_v^\perp\|_2^2 \geq \|f_r\|_2^2 \cdot \left( 1 - O \left( \frac 1{d^2}  \right ) \right) 
\end{equation}
where $f^\perp$ denotes
the projection of $f$ on the space orthogonal to the all ones vector.

To prove \eqref{eq.fnorm}, call $S(r,\ell)$ the set of nodes at distance exactly $\ell$
from $r$, and $C_\ell := \sum_{v \in S(r,\ell)} f_r(v)^2$ the contribution to $||f_r||^2$ of the nodes in $S(r,\ell)$. Then we have
\[ || f||_2^2 = \sum_{\ell = 0}^k C_\ell \]
and $C_0 = 1$, so it suffices to prove that, for $0 \leq \ell \leq k-1$, we have
\[ C_\ell \cdot \left( 1- \frac 8{\sqrt d} \right) \leq C_{\ell+1} \leq C_\ell \]
which follows from
\[ C_{\ell+1} = \sum_{v\in S(r,\ell+1)} f^2_{\parent(v)} w(u,v) = \sum_{u\in
S(r,\ell)} f^2_u \cdot ( w(u)- w(\parent(u) ,u)) \]
(there is an abuse of notation in the last expression: when $u=r$, then take
$w(\parent(r),r)$ to be zero) and from
the fact that $1-4/\sqrt d \leq w(u) \leq 1$ for every $u$, and the fact that all edges have weight at most $4/\sqrt d$.

To prove \eqref{eq.proj}, we see that

\[ || f_r^\perp ||_2^2 = || f_r ||^2 - ||f_r^{1} ||_2^2 \]
where $f^{1}_r$ is the projection of $f$ on the direction parallel to the all-one vector $\bone = (1,\ldots, 1)$, and
\[ || f^1_r||^2_2 = \left\langle f_r, \frac 1 {\sqrt n} \bone \right\rangle^2  = \frac 1n \left(\sum_v f_r \right)^2 = \frac 1n || f_r||_1^2 \leq \frac 1n || f_r||^2_2 \cdot ||f_r||^2_0 
\leq O \left ( \frac 1 {d^2} \right) \cdot  || f_r||^2_2 \]
where we used Claim \ref{claim.ball} to bound the ball of radius $k$ around $r$, which 
is the number of non-zero coordinates in $f$.

We now come to the core of the analysis

\begin{lemma}There exists a vertex $r$ such that $f_r^T W f_r\ge
2k/\sqrt{d}-O\left(k^2/d^{3/4}\right).$
\end{lemma}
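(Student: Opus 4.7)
The plan is to lower bound the expected value $\E_r[f_r^T W f_r]$ over a uniformly random root $r \in V$ by $2k/\sqrt d - O(k^2/d)$, which yields the claim for at least one $r$ and is in fact stronger than the stated bound. The key structural observation is that, since $k < (g-1)/2$, the ball of radius $k$ around any $r$ is a tree, so every edge $\{u,v\}$ contributing nonzero weight to $f_r^T W f_r$ has a well-defined parent-child orientation in the BFS tree rooted at $r$. Writing $u = \parent_r(v)$, the recursion defining $f_r$ gives $f_r(v) = f_r(u)\sqrt{w(u,v)}$, so the edge's contribution is $2 w(u,v) f_r(u) f_r(v) = 2\, w(u,v)^{3/2} f_r(u)^2$.

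I would then swap the order of summation. For a directed edge $(u,v)$ with $\{u,v\}\in E$, the values of $r$ for which $u = \parent_r(v)$ and $\dist(r,u) = \ell$ correspond, by reversing the unique shortest path from $r$ to $u$, to non-backtracking walks $u = u_0, u_1, \ldots, u_\ell$ in $G$ with $u_1 \ne v$ (these are automatically simple since the girth exceeds $2k$). Writing
\[ \rho_\ell(u,v) \;:=\; \sum_{\substack{u_0 = u,\, u_1,\ldots, u_\ell \\ u_1 \ne v}} \prod_{i=0}^{\ell-1} w(u_i, u_{i+1}), \]
and noting that $f_r(u)^2$ equals the corresponding path weight, we obtain
\[ \sum_r f_r^T W f_r \;=\; 2 \sum_{\{u,v\}\in E} w(u,v)^{3/2} \sum_{\ell=0}^{k-1} \bigl(\rho_\ell(u,v) + \rho_\ell(v,u)\bigr). \]
Using the recursion $\rho_{\ell+1}(u,v) = \sum_{u_1 \ne v} w(u,u_1)\, \rho_\ell(u_1,u)$ together with the preliminary bounds $w(u) \ge 1 - 4/\sqrt d$ and $w(u,v) \le 4/\sqrt d$, a short induction on $\ell$ shows that $\rho_\ell(u,v) \ge 1 - O(\ell/\sqrt d)$, so that $\sum_{\ell=0}^{k-1}(\rho_\ell(u,v) + \rho_\ell(v,u)) \ge 2k - O(k^2/\sqrt d)$.

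To control the edge-weight factor, I would apply Jensen's inequality to the convex function $x \mapsto x^{3/2}$: with $|E| = dn/2$ and $\sum_e w(e) = \tfrac12 \sum_u w(u) \ge (1 - 4/\sqrt d)\, n/2$,
\[ \sum_e w(e)^{3/2} \;\ge\; |E| \left(\frac{\sum_e w(e)}{|E|}\right)^{3/2} \;\ge\; \frac{n}{2\sqrt d}\bigl(1 - O(1/\sqrt d)\bigr). \]
Combining these two estimates and dividing by $n$ gives $\E_r[f_r^T W f_r] \ge 2k/\sqrt d - O(k^2/d)$, which dominates the claimed error $O(k^2/d^{3/4})$, so the desired $r$ exists. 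The main obstacle is the induction on $\rho_\ell$: when passing from $\rho_\ell(u) = \sum_{u_1} w(u,u_1)\rho_{\ell-1}(u_1,u)$ to $\rho_\ell(u,v)$ one must carefully absorb the deleted $w(u,v)\rho_{\ell-1}(v,u)$ term and verify that the $O(1/\sqrt d)$ multiplicative losses accumulate only linearly in $\ell$, which requires using both the weighted-degree lower bound and the individual edge-weight upper bound from the preliminary claims.
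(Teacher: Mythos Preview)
Your proposal is correct and in fact yields a sharper error term, $O(k^2/d)$ rather than the paper's $O(k^2/d^{3/4})$.

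The paper takes a probabilistic route: it interprets $f_r(v)^2$ (up to a $(1-4/\sqrt d)^{\ell}$ factor) as the probability of reaching $v$ in $\dist(r,v)$ steps of the simple random walk from $r$, rewrites $f_r^T W f_r$ as an expectation of $\sum_{i=1}^k \sqrt{w(X_{i-1},X_i)}$ restricted to the event that the walk is non-backtracking, and then averages over $r$ drawn from the \emph{stationary} distribution $\pi$. The convexity step is the same as yours, but the paper must separately bound the contribution of backtracking walks; it does so crudely via the deterministic bound $\sqrt{w(e)}\le 2/d^{1/4}$ together with $\P[\text{backtrack}]\le O(k/\sqrt d)$, which produces the $O(k^2/d^{3/4})$ error.

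Your approach is more elementary: by swapping the order of summation you reduce directly to the quantities $\rho_\ell(u,v)$ and bound them by a clean induction using only the edge-weight and degree assumptions. This bypasses the random-walk interpretation entirely, avoids the lossy backtracking estimate, and explains the improved error. The paper's argument is perhaps more conceptual (it makes the connection to mixing explicit), but yours is shorter and tighter; either is perfectly valid for the lemma, and in the downstream application with $k=d^{1/8}$ the difference is immaterial.

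One small point worth making explicit in a polished write-up: your $\rho_\ell(u,v)$ is the sum over \emph{non-backtracking} walks (the displayed formula does not show the constraint $u_{i+1}\ne u_{i-1}$), and the identification of these with the roots $r$ at distance $\ell$ from $u$ on the ``not-$v$'' side uses that the ball of radius $k$ around $u$ is a tree, so that non-backtracking walks of length $\le k-1$ are exactly the simple paths. You note this, and it follows from $k<(g-1)/2$; just be sure the quantifier is visible in the definition.
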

\begin{proof}
    For any $r$, let $T_r$ denote the tree rooted at $r$
    of depth $k$ in $G$. We will think of the edges of $T_r$ as being directed edges $(u,v)$
    where $u$ is the parent of $v$. With some abuse of notation, we will also use $T_r$  to denote
    the set of vertices of $T_r$ and to denote the set of edges of $T_r$. 

    Recall by the definition of $f_r$ that if $u$ is the parent of $v$, then $f_r(v) = \sqrt{w(u,v)} f_r(u)$. 
    We have:

    \begin{align*}
    f_r^T W f_r & = 2\sum_{(u,v) \in T_r} w(u,v) f_r(u) f_r(v) \\
         & = 2\sum_{(u,v) \in T_r}  \sqrt{w(u,v)} f^2_r (v)  \\ 
         & = 2 \sum_{v \in T_r - \{ r \} }  \sqrt{w(\parent(v),v)} f^2_r (v)
     \end{align*}

	Consider now the simple random walk on $G$, where edges are selected with
	probability proportional to their weight. Then the transition probability from a
	vertex $u$ to a vertex $v$ is 
	$$p(u,v)=\frac{w(u,v)}{w(u)}.$$
	Recalling our assumption on the minimum weighted degree we have,
	$$ w(u, v)\ge \left( 1 - \frac 4 {\sqrt d} \right)  p(u,v)$$

	Let $\P_r$ denote the law of the $k-$step random walk
	$r=X_0,X_1,X_2,\ldots,X_k$ started at $r$. Suppose $v$ is a vertex in
$T_r$ at distance $\ell=\dist(r,v)$ from $r$ 
	and let $(r,v_1,\ldots,v_{\ell -1} , v)$ be the unique path from $r$ to
	$v$ in $T_r$ (and also in $G$, by the girth assumption). Then we have
$$f_r(v)^2  \geq \left( 1 - \frac 4{ \sqrt d} \right)^\ell
  p(r,v_1) \cdot p(v_1,v_2) \cdots p(v_{\ell-1} , v) \\
=  \left( 1 - \frac 4{ \sqrt d} \right)^\ell
\P_r\{X_{\dist(r,v)}=v\},$$
since traversing this path is the only way to reach $v$ in $\dist(r,v)$ steps.
Thus, we have for every choice of root $r\in V$:
	\begin{align*}
	 f_r^TWf_r &\ge 
	2\left( 1 - \frac 4{ \sqrt d} \right)^k \E_r \sum_{v\in T_r\setminus\{r\}} \{X_{\dist(r,v)}=v\} \sqrt{w(\mathrm{parent}(v),v)} 	
	\\ &= 
	2\left( 1 - \frac 4{ \sqrt d} \right)^k \E_r
	\sum_{i=1}^k \{\dist(r,X_i)=i\} \sqrt{w(\mathrm{parent}(X_i),X_i)} 	
	\\&\quad\textrm{since the walk can be at only one vertex at every step}	
	\\ &=
	2\left( 1 - \frac 4{ \sqrt d} \right)^k \E_r
	\sum_{i=1}^k \{\textrm{the walk is nonbacktracking up to step $i$}\}
\sqrt{w(X_{i-1},X_i)} 	
	\\ &\ge
	2\left( 1 - \frac 4{ \sqrt d} \right)^k \E_r\left[\{\textrm{the walk is
	nonbacktracking up to step $k$}\}\cdot \sum_{i=1}^k \sqrt{w(X_{i-1},X_i)}\right] 	
	\\ &=
	2\left( 1 - \frac 4{ \sqrt d} \right)^k\left( \E_r \sum_{i=1}^k
\sqrt{w(X_{i-1},X_i)} -\E_r \left[\{\textrm{the walk backtracks}\}\cdot
\sum_{i=1}^k \sqrt{w(X_{i-1},X_i)}\right] \right)
	\end{align*}

We will show that a good $r$ exists by averaging this bound over all $r$
according to the stationary distribution of the simple random walk: 
$$\pi(r)=\frac{w(r)}{\sum_{v\in V}w(v)}.$$ 
This will require a lowerbound on the first term and an upperbound on
the second term above, averaged over $r$. We achieve this in the following two
propositions, where $\P$ denotes the law of a
stationary $k-$step walk $\pi\sim X_0,X_1,\ldots,X_k$, and we have the relation
$$ \E(\cdot) = \sum_{r\in V} \pi(r)\E_r(\cdot).$$

\begin{prop} $$\E \sum_{i=1}^k \sqrt{w(X_{i-1},X_i)} \ge
\frac{k}{\sqrt{d}}-\frac{2k}{d}. $$\end{prop}
\begin{proof} Recall that the marginal distribution of every edge in a stationary random
walk is the same, and the edge $uv$ appears with probability proportional to
$w(u,v)$. Thus we have:
$$\E \sum_{i=1}^k \sqrt{w(X_{i-1},X_i)} = k\E\sqrt{w(X_0,X_1)}=k\cdot \frac{\sum_{uv \in E} w(u,v)^{3/2}}{\sum_{uv\in E} w(u,v)}.$$
	Since the function $x^{3/2}$ is convex the latter expression is
	minimized when all the $w(u,v)$ are equal; noting that $|E|=dn/2$, and
	$$S:=\sum_{uv\in E}
	w(u,v) = \frac12 \sum_{v\in V} w(v)\ge \left(1-\frac{4}{\sqrt{d}}\right)\cdot n/2$$
	 we have a lower bound of $k$ times
	$$ \frac{(dn/2)\cdot (S/(dn/2))^{3/2}}{S}\ge \sqrt{\frac{S}{dn/2}}\ge
\left(1-\frac4{\sqrt{d}}\right)^{1/2}\cdot\frac1{\sqrt{d}}=\frac{1}{\sqrt{d}}-\frac{2}{d}.$$

\end{proof}

\begin{prop} $$ \E \left[\{\textrm{the walk
	backtracks}\}\cdot \sum_{i=1}^k
\sqrt{w(X_{i-1},X_i)}\right]  \le \frac{40k^2}{d^{3/4}},$$
whenever $d\ge 25$. \end{prop}
\begin{proof}
Since every edge can be assumed to have weight at most $4/\sqrt{d}$, we have the
deterministic bound
$$ \sum_{i=1}^k \sqrt{(w(X_{i-1},X_i)} \le 2k/d^{1/4}.$$
Let $B_i$ denote the event that the walk backtracks at step $i$. Then we have
\begin{align*}
\P(B_2\lor \ldots \lor B_k) &\le \sum_{i=2}^k \P(B_i)
\\&\le (k-1)\cdot \frac{4/\sqrt{d}}{1-4/\sqrt{d}}
\\&\quad\textrm{since $p(u,v)\le w(u,v)/w(u)$ for every edge $(u,v)$}.
\\&\le \frac{20k}{\sqrt{d}}
\end{align*}
when $d\ge 25$. Combining this with the previous bound gives the desired result.
\end{proof}
Combining the above bounds gives:
\begin{align*}
\sum_{r\in V}\pi(r)f_r^TWf_r \ge
2\left(1-\frac4{\sqrt{d}}\right)^k\left(\frac{k}{\sqrt{d}}-\frac{2k}{d}-\frac{40k^2}{d^{3/4}}\right)
\ge \frac{2k}{\sqrt{d}}-O\left(\frac{k^2}{d^{3/4}}\right).
\end{align*}
Thus, there must exist a vertex $r$ satisfying the desired bound.
\end{proof}

Given the Lemma, the main result is obtained easily as follows.
\begin{proof}
	Let $f:=f_r$ from the previous Lemma and let $f'$ be $f$ with signs alternating
	at each level of the tree $T_r$. Observe that $f^TWf = -f'^TWf'$ since all
	edges are between levels of the tree. Thus, we have
	$$ f^T(D-A)f = f^TDf - f^TWf \le f^TDf - 2k(1-\delta)/\sqrt{d}$$
	and
	$$ f'^T(D-A)f' \ge f^TDf + 2k(1-\delta)/\sqrt{d}$$
	for some $\delta=O(k/d^{1/4})$, 
	since $f'^TDf'=f^TDf$. Thus, the ratio of these
	quantities is at least:
\begin{equation}\label{eq.ratio}\frac{f'^T(D-A)f}{f^T(D-A)f}=\frac{f^TDf +
2k(1-\delta)/\sqrt{d}}{f^TDf -
2k(1-\delta)/\sqrt{d}} \ge
1+\frac{4k(1-\delta)}{f^TDf\cdot\sqrt{d}}\ge
1+\frac{4k(1-\delta)}{(k+1)\sqrt{d}},\end{equation}
since $f^TDf\le \|f\|_2^2\le (k+1)$ by \eqref{eq.fnorm}. 

We now take $f^+$ and $f^-$ to be the projections of $f$ and $f'$
	orthogonal to the all ones vector; since the quadratic form of $L=D-A$
	is translation invariant this does not change the above quantities. The
ratio for the normalized vectors is now:
$$\frac{(f^-)^T Lf^-/\|f^-\|_2^2}{(f^+)^TLf^+/\|f^+\|_2^2} =
\frac{f'^T(D-A)f'}{f^T(D-A)f}\frac{\|f^\perp\|_2^2}{\|f'^\perp\|_2^2} \ge
\left(1+\frac{4}{\sqrt{d}}(1-\delta)(1-1/k)\right)\left(1-O(1/d^2)\right),$$
by \eqref{eq.ratio} and \eqref{eq.proj}. Setting $k=d^{1/8}$ gives the desired
bound.

\end{proof}

\section{Proof of Theorem \protect\ref{th.online}}
To ease notation and to be consistent with the proof in \cite{BSS12}, we will
let $\beta=d/2$ and talk about choosing $T=\beta n$ vectors instead of $dn/2$
vectors.
Let $n$ be a power of $4$ and
let $m=n$. Suppose $H_n$ is the Hadamard matrix of size $n$, normalized so
$\|H_n\|=1$, and let $h_1,\ldots,h_n$ be its columns. During any execution of the
game, let $$A_\tau:=\sum_{t\le \tau} s_t \vt_{i(t)} (\vt_{i(t)})^T$$ denote
the matrix obtained after $\tau$ rounds, with $A_0=0$. Consider the following
adversary:
\begin{quote}
In round $\tau+1$ present the player with vectors $v^{(\tau+1)}_1:=Uh_1,\ldots,
v^{(\tau+1)}_n:=Uh_n$, where $U$ is an orthogonal matrix whose columns form an
eigenbasis of $A_\tau$. 
\end{quote}

\noindent Note that the vectors $v^{(\tau+1)}_1,\ldots,v^{(\tau+1)}_n$ are
always isotropic since
$$\sum_{i=1}^n (Uh_i)(Uh_i)^T = UHH^TU^T = I.$$
We will show that playing any strategy against this adversary must incur a condition
number of at least 
$$\kappa_{d}-o_n(1)=
\frac{(\sqrt{\beta}+1)^2}{(\sqrt{\beta}-1)^2}=1+\frac{4}{\sqrt{\beta}}+O(1/\beta).$$
 Let
$p_\tau(x):=\det(xI-A_\tau)=\prod_{j=1}^n(x-\lambda_j)$ denote the
characteristic polynomial of $A_\tau$.  Observe that for any choice
$s=s_{\tau+1}$ and $v=Uh_i$ made by the player in round
$\tau+1$, we have:
\begin{align*}
p_{\tau+1}(x)&=\det(xI-A_\tau-svv^T)
\\&= \det(xI-A_\tau)\det(I-(xI-A)^{-1}(svv^T))
\\&=p_\tau(x)\left(1-s\sum_{j=1}^n \frac{\langle
v,u_j\rangle^2}{x-\lambda_j}\right)
\\&=p_\tau(x)\left(1-\frac{s}{n}\sum_{i=j}^n\frac1{x-\lambda_j}\right),
\\&\qquad\textrm{since $\langle Uh_i,u_j\rangle=\langle h_i,U^Tu_j\rangle=\langle
h_i, e_j\rangle=\pm 1$ for every $j$}
\\&= p_\tau(x)-(s/n)p'_\tau(x)
\\&= (1-(s/n)D)p_\tau(x),
\end{align*}
where $D$ denotes differentiation with respect to $x$. Thus, the characteristic
polynomial of $A_{\tau+1}$ does not depend on the choice of vector in round
$\tau+1$, but only on the scaling $s_{\tau+1}$. Applying this fact inductively
for all $T$ rounds, we have:
$$ p_T(x) = \prod_{t\le T} (1-(s_t/n)D) x^n,$$
since $p_0(x)=x^n$. 
Note that since every $p_\tau(x)$ is the characteristic polynomial of a symmetric
matrix, it must be real-rooted. 

\begin{remark} \label{rem.rr} Since the above calculation holds for all choices
of weights $s$ and matrices $A$, we have recovered the well-known fact that for
any real-rooted $p(x)$, the polynomial $(1-\alpha D)p(x)$ is also
real-rooted for real $\alpha$.\end{remark}

Let $S:=\sum_{t\le T} s_t/n$. We will show that among all assignments of the
weights $\{s_t\}$ with sum $S$, the roots of $p_T(x)$ are extremized when all of the
$s_t$ are equal, namely\footnote{To avoid confusion, we remark that in what follows $T$ is always a number and
never the transpose (we will be dealing only with polynomials, not matrices).}:
\begin{enumerate}
\item [(A)] $\lambda_{min}(p_T)\le \lambda_{min} (1-(S/T)D)^Tx^n.$
\item [(B)] $\lambda_{max}(p_T)\ge \lambda_{max} (1-(S/T)D)^Tx^n.$
\end{enumerate}
To do this, we will use some facts about majorization of roots of polynomials.
Recall that a nondecreasing sequence $b_1\le b_2\le\ldots b_n$ majorizes another
sequence $a_1\le\ldots\le a_n$ if $\sum_{j=1}^n a_j=\sum_{j=1}^n b_j$ and the partial sums satisfy:
$$ \sum_{j=1}^k a_j \ge \sum_{j=1}^k b_j$$
for $k=1,\ldots,n-1$. We will denote this by $(a_1,\ldots,a_n)\prec
(b_1,\ldots,b_n)$, and notice that this condition implies that $a_1\ge b_1$ and
$a_n\le b_n$, i.e., the extremal values of $a$ are more concentrated than those
of $b$. We will make use of the fact that for a given sum $S$, the uniform
sequence $(S/n,\ldots,S/n)$ is majorized by every other sequence with sum $S$.

We now appeal to the following theorem of Borcea and Br\"anden
\cite{borcea2010hyperbolicity}.
\begin{theorem}\label{th.bb} Suppose $L:\R_n[x]\rightarrow\R[x]$ is a linear
transformation on polynomials of degree $n$. If $L$ maps real-rooted polynomials to real-rooted
polynomials, then $L$ preserves majorization, i.e.
$$ \lambda(p)\prec \lambda(q) \quad\Rightarrow \lambda(L(p))\prec
\lambda(L(q)),$$
where $\lambda(p)$ is the vector of nondecreasing zeros of $p$.\end{theorem}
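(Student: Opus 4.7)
My plan is to reduce the theorem to analyzing how $L$ responds to an infinitesimal ``T-transform'' on the root multiset of $q$. By the Hardy-Littlewood-P\'olya theorem, $\lambda(p) \prec \lambda(q)$ holds iff $\lambda(p)$ is obtained from $\lambda(q)$ by a finite sequence of T-transforms, each contracting a pair of coordinates $(\mu_i,\mu_j)$ toward their common mean. Since $\prec$ is transitive and $L$ is linear, it suffices to treat the case where a single T-transform takes $\lambda(q)$ to $\lambda(p)$. This move can be encoded as a one-parameter family
\[ q_s(x) = (x - \mu_i(s))(x - \mu_j(s))\prod_{k\ne i,j}(x - \mu_k), \]
with $(\mu_i(s),\mu_j(s))$ interpolating linearly between the ``spread-out'' endpoint $q_0=q$ and the ``contracted'' endpoint $q_1=p$. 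Every $q_s$ is manifestly real-rooted, so by hypothesis every $L(q_s)$ is real-rooted, and its root vector $\lambda(L(q_s))$ varies continuously in $s$.

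To establish $\lambda(L(p)) \prec \lambda(L(q))$, I would use the dual Hardy-Littlewood-P\'olya criterion: $\mu \prec \nu$ iff $\sum_k \phi(\mu_k) \le \sum_k \phi(\nu_k)$ for every convex $\phi:\R\to\R$. Setting $\Phi(s) := \sum_k \phi(\lambda_k(L(q_s)))$, the task reduces to proving $\Phi'(s)\le 0$ for all $s\in[0,1]$. Standard root perturbation via implicit differentiation of $L(q_s)(\lambda_k(s))=0$ gives $\dot\lambda_k(s) = -\partial_s L(q_s)(\lambda_k)/L(q_s)'(\lambda_k)$, so $\Phi'(s) = \sum_k \phi'(\lambda_k)\dot\lambda_k$. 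The hypothesis that $L(q_s)$ is real-rooted for \emph{all} $s\in[0,1]$ enters crucially here: it prevents pairs of roots from colliding and escaping to the complex plane, and Lidskii-type interlacing constrains the collective motion of the $\dot\lambda_k$ enough to force the pairing with a convex $\phi'$ to have the correct sign.

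The main obstacle is converting the abstract real-rootedness hypothesis into the concrete sign estimate on $\Phi'(s)$. The cleanest route is to invoke the companion Borcea-Br\"and\'en characterization of hyperbolicity preservers on $\R_n[x]$, namely that such an $L$ is equivalent to stability of its bivariate symbol $G_L(x,y) := L[(x+y)^n]$, and to use this to factor $L$ (approximately, as a limit) into a composition of elementary preservers: multiplication by a real linear factor $x-a$, and Weyl-type differential operators $1+\alpha D$ with $\alpha\in\R$. Each elementary preserver can be checked directly to preserve majorization: multiplication by $(x-a)$ augments the root vector by the single constant $a$ and leaves the majorization order intact, while the case of $1+\alpha D$ is handled via Rolle's theorem together with a Newton-identity accounting of how differentiation contracts the root spread (equivalently, how $\alpha D$ corresponds to an explicit Schur-concave operation on the elementary symmetric functions). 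Majorization is closed under pointwise limits of root vectors, so the atomic analysis extends to the original $L$ by approximation.
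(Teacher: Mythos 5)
The paper does not prove this statement; it is quoted verbatim as a theorem of Borcea and Br\"and\'en \cite{borcea2010hyperbolicity} and used as a black box, so the only question is whether your sketch would actually constitute a proof of that cited result. It would not, for two reasons. First, the analytic core of your plan --- showing $\Phi'(s)\le 0$ along the T-transform path --- is asserted rather than proved: knowing that each $L(q_s)$ is real-rooted keeps the roots real and continuous, but it does not by itself constrain the velocities $\dot\lambda_k(s)$ so that $\sum_k\phi'(\lambda_k)\dot\lambda_k\le 0$ for every convex $\phi$; ``Lidskii-type interlacing constrains the collective motion'' is exactly the statement to be proved, and nothing in the setup supplies it. You correctly identify this as the main obstacle, but the obstacle is the theorem.

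Second, the proposed repair does not exist in the form you invoke. The Borcea--Br\"and\'en classification of hyperbolicity preservers via the symbol $G_L(x,y)=L[(x+y)^n]$ characterizes such operators by real stability of $G_L$; it does not factor them, even as limits, into compositions of multiplications by $(x-a)$ and operators $1+\alpha D$. The semigroup generated by those two families (together with limits) is a strictly smaller class than the set of all preservers --- for instance, general finite multiplier sequences and Schur--Szeg\H{o}-type compositions preserve real-rootedness on $\R_n[x]$ but are not of this form --- so the ``atomic'' reduction collapses. Moreover, even the atomic case that $1+\alpha D$ preserves majorization of root vectors is itself a nontrivial instance of the theorem, and ``Rolle plus Newton identities'' is not an argument for it. If you only need the application in this paper, note that the operators involved are exactly $\prod_t(1-(s_t/n)D)$ applied to $x^n$, so a self-contained treatment could target that special family; but as a proof of Theorem \ref{th.bb} in the stated generality, the proposal has a genuine gap, and the honest course is to do what the authors do and cite \cite{borcea2010hyperbolicity}.
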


Let $$\phi(x):=(x-(S/T))^T$$ and let $\psi_T(x):=\prod_{t=1}^T (x-s_t/n).$ Observe that
$(S/T,\ldots,S/T)=\lambda(\phi)\prec \lambda(\psi_T)$, since the sum of the roots
of $\psi_T$ is $S$.  Consider the linear transformation $L:\R_T[x]\rightarrow \R[x]$ defined by:
$$L(p) = D^np(1/D)x^n,$$
and observe that for any monic polynomial with roots $\alpha_t$:
$$L\left(\prod_{t=1}^T (x-\alpha_t)\right) = \prod_{t=1}^T (1-\alpha_t D) x^n.$$
By remark \ref{rem.rr}, $L(p)$ is real-rooted whenever $p$ is real-rooted, so
Theorem \ref{th.bb} applies. We conclude that the roots of $L(\psi_T)=p_T(x)$ majorize
the roots of $L(\phi)=(1-(S/T)D)^T x^n,$ so items (A) and (B) follow.

To finish the proof, we observe (as in \cite{MSSICM}, Section 3.2) that 
$$ (1-(S/T)D)^Tx^n = \mathcal{L}_n^{(T-n)}(n^2x/S)=:\mathcal{L}(x)$$
where the right hand side is a scaling of an {\em associated Laguerre polynomial}. The
asymptotic distribution of the roots of such polynomials is known, and converges
to the Marchenko-Pastur law from Random Matrix Theory as $n\rightarrow\infty$. In particular, Theorem
4.4 of \cite{dette1995some} tells us that 
$$\lambda_{min} \mathcal{L}(x)\rightarrow
\frac{S}{n}\left(1-\sqrt{\frac{n}{T}}\right)^2$$
and
$$\lambda_{max} \mathcal{L}(x)\rightarrow
\frac{S}{n}\left(1+\sqrt{\frac{n}{T}}\right)^2,$$
as $n\rightarrow\infty$ with $T=\beta n$. Thus, the condition number of of $A_T$ is
at least
$$
\frac{\lambda_{max}\mathcal{L}(x)}{\lambda_{min}\mathcal{L}(x)}=\kappa_d-o_n(1),$$
as desired.

\subsection*{Acknowledgments}
We would like to thank Alexandra Kolla for helpful conversations, as well as the
Simons Institute for the Theory of Computing, where this work was carried out.
\bibliography{alonboppana}

\end{document}